\newcommand\IR[1]{$\dag$\footnote{IR: #1}}
\newcommand{\ourpath}{\ensuremath{\pi}\xspace} 
\newcommand{\pset}{\ensuremath{\Pi}\xspace} 
\newcommand{\union}{~\ensuremath{{\large \cup}\kern-0.5em\raisebox{0.75ex}{\tiny o}}~\xspace}
\newcommand{\intersec}{\xspace\ensuremath{{\large \cap}\kern-0.5em\raisebox{0.5ex}{\tiny o}}~\xspace}
\newcommand{\compl}{\xspace\ensuremath{{\large \setminus}\kern-0.2em\raisebox{0.5ex}{\tiny o}}~}
\newcommand{\ourwatertanksafeprop}{\ensuremath{\psi_{wt}}\xspace}
\newcommand{\tankcount}{\ensuremath{J}\xspace}
\newcommand{\tanksize}{\ensuremath{TS}\xspace}
\newcommand{\wl}{\ensuremath{w}\xspace}
\newcommand{\wlp}{\ensuremath{\hat{w}}\xspace}
\newcommand{\inflow}{\ensuremath{in}\xspace}
\newcommand{\outflow}{\ensuremath{out}\xspace}
\newcommand{\uct}{\ensuremath{UT}\xspace}
\newcommand{\lct}{\ensuremath{LT}\xspace}
\newcommand{\mdl}{\ensuremath{\mathsf{M}}\xspace}
\newcommand{\mcp}{\ensuremath{\mathsf{M}_{\text{OS}}}\xspace}
\newcommand{\mba}{\ensuremath{\mathsf{M}_{\text{AS}}}\xspace}
\begin{document}

\title{Conservative Safety Monitors of\\ Stochastic Dynamical Systems
}
%
%

\author{Matthew Cleaveland\inst{1}$^{\text{(\Letter)}}$ \and Oleg Sokolsky\inst{1} \and Insup Lee\inst{1}  \and Ivan Ruchkin\inst{2}}

\authorrunning{M. Cleaveland et al.}

\institute{University of Pennsylvania, Philadelphia, PA, USA 19104 \\ \email{\{mcleav,sokolsky,lee\}@seas.upenn.edu} \and University of Florida, Gainesville, FL, USA 32611 \\ \email{iruchkin@ece.ufl.edu} }
\maketitle              
\begin{abstract}

Generating accurate runtime safety estimates for autonomous systems is vital to ensuring their continued proliferation. However, exhaustive reasoning about future behaviors is generally too complex to do at runtime. To provide scalable and formal safety estimates, we propose a method for leveraging design-time model checking results at runtime. Specifically, we model the system as a probabilistic automaton (PA) and compute bounded-time reachability probabilities over the states of the PA at design time. At runtime, we combine distributions of state estimates with the model checking results to produce a bounded time safety estimate. We argue that our approach produces well-calibrated safety probabilities, assuming the estimated state distributions are well-calibrated. We evaluate our approach on simulated water tanks.


\keywords{Runtime Monitoring  \and Probabilistic Model Checking \and Calibrated Prediction}
\end{abstract}


\section{Introduction}

\looseness=-1
As autonomous systems see increased use and perform critical tasks in an open world, reasoning about their safety and performance is critical. In particular, it is vital to know if a system is likely to reach an unsafe state in the near future.

\looseness=-1
The field of predictive runtime monitoring offers ways for performing this reasoning. The basic idea is to reason about the expected future behaviors of the system and its properties. However, accurately computing future system states is computationally infeasible at runtime, as it requires running expensive reachability analysis on complex models. Previous works have computed libraries of reachability analysis results at design time and used them at runtime \cite{chou2020predictive}. But these approaches require the system dynamics to have certain invariances to reduce the number of times reachability analysis must be called offline. 

\looseness=-1
Other lines of work use system execution data to learn discrete probabilistic models of the system, which are then used to perform predictive runtime monitoring, as there is rich literature for runtime monitoring of discrete automata. These models range from discrete-time Markov chains (DTMCs) \cite{Babaee2019} to hidden Markov models (HMMs) \cite{Babaee2018} to Bayesian networks \cite{Jaeger2020}. However, it is difficult to provide guarantees relating the performance of the automata models to the real system, due to the fact that they are fit using finite data. Of particular interest is ensuring the models are conservative: it is essential to avoid run-time overconfidence in the safety of the dynamical system.

\looseness=-1
In this paper, we propose a method for predictive run-time monitoring of safety probabilities that builds on the strengths of the existing works. We use a mix of conservative modeling techniques and data-driven modeling techniques to transform the dynamical system into a probabilistic automaton (PA).\footnote{In our scope, PAs are equivalent to Markov Decision Processes (MDPs) without rewards: both have finite states with  probabilistic and non-deterministic transitions.} We then employ probabilistic model checking (PMC) to compute the safety of the model over all its states offline. Finally, we synthesize lightweight monitors that rely on the model checking results and a well-calibrated state estimator to compute the probability of system safety at runtime.

\looseness=-1
Under the assumption that the PA model is conservative and that the state estimator is well-calibrated, we prove that our runtime monitors are conservative. We demonstrate that our modeling technique is likely to result in conservative PA models. 
Finally, we show that our method produces well-calibrated, accurate, and conservative monitors on a case study using water tanks.

The contributions of this paper are threefold:
\begin{itemize}
    \item We present a method for conservatively modeling dynamical systems as PAs and using PMC results at runtime to monitor the system's safety.
    \item We prove that if our PA models are conservative then the monitor safety estimates will be conservative.
    \item We demonstrate our approach on a case study of water tanks. We empirically show that our PA models and runtime monitors are both conservative.
\end{itemize}
\looseness=-1
The rest of the paper is structured as follows. We give an overview of the related work in \Cref{sec:related}, provide the necessary formal background in \Cref{sec:background}, and formulate the problem in \Cref{sec:problem}. \Cref{sec:approach} goes over our proposed approach and \Cref{sec:guarantees} provides formal conservatism guarantees for the approach. We describe the results of our case study in \Cref{sec:caseStudy} and conclude in \Cref{sec:conclusion}.

\section{Related Work}
\label{sec:related}

We divide the previous works in the area of predictive runtime monitoring into two bins. The first bin analyzes dynamical system models, while the second analyzes automata models.


\subsubsection{Dynamical systems approaches }
\looseness=-1
A large portion of the predictive monitoring for dynamical systems literature focuses on reasoning about the safety of autonomous vehicles. Prior work has employed reachability analysis to estimate the future positions of other cars to estimate the safety of a proposed vehicle trajectory \cite{althoff2009}. In \cite{jasour2022fast}, the authors develop techniques to estimate the probability of a proposed trajectory resulting in a collision with other vehicles, which are given as distributions of states predicted by neural networks (NNs). In \cite{chou2020predictive}, the authors use precomputed reachability analysis and Bayesian inference to compute the probability of an autonomous vehicle colliding with static obstacles. This approach requires the system dynamics to have certain invariances to ensure the reachability analysis can be feasibly run at design time. This approach is conceptually similar to ours, but we employ automata-based abstractions instead of making invariance assumptions about the system dynamics. 

\looseness=-1
Previous works have also addressed the problem of synthesizing runtime monitors for signal temporal logic (STL) properties of dynamical systems. Approaches range from conformal prediction \cite{cairoli2022conformal,Lindemann2022ConformalPF}, design time forward reachability analysis \cite{yu2022model}, computing safe envelopes of control commands \cite{Yoon2019}, online linear regression \cite{Granig2020}, and uncertainty aware STL monitoring \cite{ma2021predictive}.

\subsubsection{Automata approaches} The first works of this type developed predictive LTL semantics, also called LTL$_3$ \cite{zhang2012runtime,leucker2012sliding}, for discrete automata. The LTL$_3$ semantics allowed to the system to determine if every infinite extension of an observed finite trace would satisfy or not satisfy a specification. Recent work has extended these ideas to timed systems \cite{Pinisetty2017}, multi-model systems \cite{Ferrando2022}, and systems with assumptions \cite{Cimatti2021}. Another approach uses neural networks to classify if unsafe states of a hybrid automaton (HA) can be reached from the current state of the HA \cite{Bortolussi2019,Bortolussi2021,Francesca2021}. They additionally use conformal prediction to get guarantees about the accuracy of their predictions \cite{shafer2008tutorial}. However, these frameworks give very coarse predictions, as they can only determine if a system is guaranteed to be safe, guaranteed to be unsafe, or not sure.

\looseness=-1
Another thread of work uses data to learn probabilistic models that can then be used in conjunction with predictive monitoring techniques. In \cite{Babaee2018}, the authors learn an HMM model of the system from simulation data and perform bounded reachability analysis to determine the probability of an LTL specification being violated from each state of the HMM. This work was extended using abstraction techniques to simplify the learned models \cite{Babaee2018_2}. In \cite{Babaee2019}, the same authors employ importance sampling to efficiently learn discrete-time Markov chain (DTMC) models from data, which they then use to synthesize predictive monitors. In \cite{Jaeger2020}, the authors use Bayesian networks to model temporal properties of stochastic timed automata. The Bayesian networks are updated online to improve their performance. Finally, in \cite{Ferrando2021} the authors use process mining techniques to learn predictive models of systems, which are in turn used to synthesize predictive runtime monitors. 
An interesting line of future work for us is exploring applying our runtime monitoring technique using these models as they are updated from new observations online. 

\looseness=-1
The most similar work to ours presents two methods for synthesizing predictive monitors for partially observable Markov decision processes (POMDPs)~\cite{junges2021runtime}. The first approach combines precomputed safety probabilities of each state with POMDP state estimators to estimate the probability that the system will remain safe. However, state estimation of POMDPs is computationally expensive since the set of potential state distributions increases exponentially due to the non-determinism in the model. The second approach uses model checking of conditional probabilities to directly compute the safety of the system based on the observation trace. A downside of this approach is that it requires running model checking at runtime. Our method, on the other hand, avoids expensive computations at run time while maintaining design-time scalability through abstraction. 

\section{Background}
\label{sec:background}

In the following \Cref{def:pa,def:parcomp,def:prob}, borrowed from Kwiatkowska et al. \cite{Kwiatkowska2013}, we use $Dist(S)$ to refer to the set of probability distributions over a set $S$, $\eta_s$ as the distribution with all its probability mass on $s \in S$, and $\mu_1 \times \mu_2$ to be the product distribution of $\mu_1$ and $\mu_2$.
\begin{definition} \label{def:pa}
A \emph{probabilistic automaton} 
(PA) is a tuple $\mdl=(S,\bar{s},\alpha,\delta,L)$, where $S$ is a finite set of states, $\bar{s}\in S$ is the initial state, $\alpha$ is an alphabet of action labels, $\delta \subseteq S \times \alpha \times Dist(S)$ is a probabilistic transition relation, and $L: S \rightarrow 2^{AP}$ is a labeling function from states to sets of atomic propositions from the set AP. 
\end{definition} 
If $(s,a,\mu)\in \delta$ then the PA can make a transition in state $s$ with action label $a$ and move based on distribution $\mu$ to state $s'$ with probability $\mu(s')$, which is denoted by $s \xrightarrow{a} \mu$. If $(s,a,\eta_{s'})\in \delta$ then we say the PA can transition from state $s$ to state $s'$ via action $a$. A state $s$ is terminal if no elements of $\delta$ contain $s$. A path in $M$ is a finite/infinite sequence of transitions $\ourpath = s_0 \xrightarrow{a_0,\mu_{0}} s_1 \xrightarrow{a_1, \mu_{1}}\hdots$ with $s_0=\bar{s}$ and $\mu_{i}(s_{i+1})>0$. A set of paths is denoted as $\pset$. We use $\mdl(s)$ to denote the PA $\mdl$ with initial state $s$.

Reasoning about PAs also requires the notion of a \textit{scheduler}, which resolves the non-determinism during an execution of a PA. For our purposes, a scheduler $\sigma$ maps each state of the PA to an available action label in that state. We use $\pset_{\mdl}^{\sigma}$ for the set of all paths through $\mdl$ when controlled by scheduler $\sigma$ and $Sch_{\mdl}$ for the set of all schedulers for $\mdl$. Finally, given a scheduler $\sigma$, we define a probability space $Pr_{\mdl}^{\sigma}$ over the set of paths $\pset_{\mdl}^{\sigma}$ in the standard manner.

Given PAs $\mdl_1$ and $\mdl_2$, we define parallel composition as follows:

\begin{definition}\label{def:parcomp}
    The \emph{parallel composition}    of PAs $\mdl_1=(S_1,\bar{s}_1,\alpha_1,\delta_1,L_1)$ and $\mdl_2=(S_2,\bar{s}_2,\alpha_2,\delta_2,L_2)$ is given by the PA $\mdl_1~||~\mdl_2=(S_1 \times S_2, (\bar{s}_1,\bar{s}_2),\alpha_1 \cup \alpha_2, \delta,L)$, where $L(s_1,s_2)=L_1(s_1)\cup L_2(s_2)$ and $\delta$ is such that $(s_1,s_2) \xrightarrow{a} \mu_1 \times \mu_2$ iff one of the following holds: (i) $s_1\xrightarrow{a}\mu_1,s_2\xrightarrow{a}\mu_2$ and $a \in \alpha_1 \cap \alpha_2$, (ii) $s_1 \xrightarrow{a}\mu_1, \mu_2=\eta_{s_2}$ and $a \in (\alpha_1 \setminus \alpha_2)$, (iii) $\mu_1=\eta_{s_1},s_2\xrightarrow{a}\mu_2$ and $a \in (\alpha_2 \setminus \alpha_1)$.
\end{definition}


In this paper, we are concerned with probabilities that the system will not enter an unsafe state within a bounded amount of time. These are represented as bounded-time safety properties, which we express using metric temporal logic (MTL)~\cite{koymans1990specifying}.
Following the notation from \cite{Katoen2013}, we denote these properties as $$\square^{\leq T} s \notin S_{unsafe},$$
where $S_{unsafe} \subset S$ is the set of unsafe states and $T\geq 0$ is the time bound.

\begin{definition}\label{def:prob}
For MTL formula $
\psi$, PA $\mdl$, and scheduler $\sigma \in Sch_{\mdl}$, the \emph{probability of $\psi$ holding} is:
\begin{align*}
    Pr_{\mdl}^{\sigma}(\psi) \coloneqq Pr_{\mdl}^{\sigma}\{ \pi \in \pset_{\mdl}^{\sigma}~|~\pi \models \psi \},
\end{align*}
where $\pi \models \psi$ indicates that the path $\pi$ satisfies $\psi$ in the standard MTL semantics~\cite{koymans1990specifying}. We specifically consider MTL safety properties, which are MTL specifications that can be falsified by a finite trace though a model. 
\end{definition}

Probabilistically verifying an MTL formula $\psi$ against $M$ requires checking that the probability of satisfying $\psi$ meets a probability bound for all schedulers. This involves computing the minimum or maximum probability of satisfying $\psi$ over all schedulers:
\begin{align*}
    Pr_{\mdl}^{min}(\psi) &\coloneqq \operatorname{inf}_{\sigma\in Sch_{\mdl}} Pr_{\mdl}^{\sigma}(\psi) \\ Pr_{\mdl}^{max}(\psi) &\coloneqq \operatorname{sup}_{\sigma\in Sch_{\mdl}} Pr_{\mdl}^{\sigma}(\psi)
\end{align*}

We call $\sigma$ a min scheduler of \mdl if $Pr_{\mdl}^{\sigma}(\psi) = Pr_{\mdl}^{min}(\psi)$. We use $Sch_{\mdl}^{min}$ to denote the set of min schedulers of \mdl.

\emph{Remark:} For the rest of this paper, we use $Pr$ when referring to model-checking probabilities and $P$ for all other probabilities.

\subsubsection{Calibration and Conservatism}

Consider a scenario where a probability estimator is predicting probability $\hat{p}$ that a (desirable) event $E$ will occur (e.g., a safe outcome). We define the calibration for the probability estimates (adapted from Equation (1) of \cite{guo2017calibration}):

\begin{definition}[Calibration]\label{def:genericCalibration}
    The probability estimates $\hat{p}$ of  event $E$ are \emph{well-calibrated} if 
    \begin{align}\label{eq:genericCalibration}
        P(E ~|~ \hat{p}=p)=p, \quad \forall p \in [0,1]
    \end{align}
\end{definition}

Next, we define conservatism for the probability estimates:

\begin{definition}[Conservative Probability]\label{def:genericConservative}
    The probability estimates $\hat{p}$ of a desirable event $E$ are \emph{conservative} if 
    \begin{align}\label{eq:genericConservative}
        P(E ~|~ \hat{p}=p)\geq p, \quad \forall p \in [0,1]
    \end{align}
\end{definition}

In other words, the estimates $\hat{p}$ are conservative if they underestimate the true probability of event $E$. Note that any monitor that is well-calibrated (\Cref{def:genericCalibration}) is guaranteed to be conservative (\Cref{def:genericConservative}), but not vice versa.

Two standard metrics for assessing the calibration of the $\hat{p}$ estimates are \emph{expected calibration error} (ECE) \cite{guo2017calibration} and \emph{Brier score} \cite{ranjan_combining_2010}. The ECE metric is computed by dividing the $\hat{p}$ values into equally spaced bins in $[0,1]$, within each bin taking the absolute difference between the average $\hat{p}$ and the empirical probability of event $E$, and weighted-averaging across bins with their sizes as weights. So ECE penalizes discrepancies between the estimator confidence and empirical probability of $E$ within each bin. The Brier score is the mean squared error of the probability estimates $$ \sum_i (\hat{p_i}-\mathbf{1}_{E_i})^2 $$

\section{Problem Statement}
\label{sec:problem}

Consider the following discrete-time stochastic system titled \mcp with dynamics:
\begin{align}
\begin{split}
    X(t+1)&=f(X(t),U(t))),\\
    Y(t)&=g(X(t),V(t)), \\
    \bar{X}(t),\bar{Z}(t)&=h(\bar{Z}(t-1),Y(t),W(t)),\\
    U(t)&=c(\bar{X}(t)),
\end{split}
\label{eq:system}
\end{align}
where $X(t) \in S \subset \mathbb{R}^n$ is the system state (with bounded $S$); $Y(t) \in \mathbb{R}^p$ are the observations; $\bar{X}(t) \in \mathbb{R}^n$ is the estimated state of the system; $\bar{Z}(t)\in \mathbb{R}^z$ is the internal state of the state estimator (e.g., a belief prior in a Bayesian filter); $U(t)\in\mathcal{U}\subset\mathbb{R}^m$ is the control output, which we discretize, resulting in a finite number $|\mathcal{U}|$ of control actions, 
 the functions $f:\mathbb{R}^n\times\mathbb{R}^m\to\mathbb{R}^n$, $g:\mathbb{R}^n\times\mathbb{R}^v\to\mathbb{R}^p$, $h:\mathbb{R}^z \times \mathbb{R}^p\times \mathbb{R}^w \to \mathbb{R}^n\times\mathbb{R}^z$ describe the system dynamics, perception map, and state estimator respectively; the function $c:\mathbb{R}^p\to\mathbb{R}^m$ is a stateless controller; and $V(t)\in D_v\subseteq\mathbb{R}^v$ and $W(t)\in D_w\subseteq\mathbb{R}^w$  describe perception and state estimator noise. The $V(t)$ noise models inexact perception, such as an object detector missing an obstacle. The $W(T)$ noise accounts for state estimators that use randomness under the hood. A common example of this is particle filters randomly perturbing their particles so that they do not collapse to the exact same value.

Let $S_{unsafe} \subset S$ denote the set of unsafe states of \mcp. At time $t$, we are interested in whether \mcp will lie in $S_{unsafe}$ at some point in the next $T$ time steps. This is represented by the bounded time reachability property
\begin{align}\label{eq:McpProperty}
    \psi_{\mcp} = \square^{\leq T} \left(X \notin S_{unsafe} \right)
\end{align}

Let $P(\psi_{\mcp}~|~\bar{Z}(t))$ denote the probability of \mcp satisfying $\psi_{\mcp}$. Our goal is to compute calibrated (\Cref{def:genericCalibration}) and conservative (\Cref{def:genericConservative}) estimates of $P(\psi_{\mcp}~|~\bar{Z}(t))$ 
at runtime, which we denote as $\widehat{P}\left(\psi_{\mcp}~|~\bar{Z}(t)\right)$.

\section{Overall Approach}
\label{sec:approach}

Our approach consists of a design time and runtime portion. At design time, a PA of the system (including its dynamics, perception, state estimation, and controller) is constructed using standard conservative abstraction techniques. Then the bounded-time safety probability for each state of the model is computed using model checking and stored in a look-up table. At runtime, the estimated state (or distribution of states) from the real system's state estimator is used to estimate the abstract state (or distribution of abstract states) of the abstract system. This abstract state (or distribution of states) is used in conjunction with the lookup table to estimate the bounded-time safety of the real system.

\subsection{Design Time}
\label{sec:approachDesignTime}

The design time aspect of our approach has two parts. First, we convert the original system \mcp into a probabilistic automaton \mba.
Then we use probabilistic model checking to compute the bounded time safety of \mba for each state in the model.

\subsubsection{Model Construction}

\looseness=-1
To convert \mcp into a probabilistic automaton, \mba, we first need to create probabilistic models of the perception $g$ and state estimation $h$ components of \mcp. To do this, we simulate \mcp and record the perception errors $X(t)-\bar{X}(t)$. We discretize the domain of these errors and estimate a categorical distribution over it. For example, this distribution would contain information such as ``the perception will output a value that is between 2m/s and 3m/s greater than the true velocity of the car with probability $1/7$.''

\looseness=-1
To convert the system dynamics $f$ and controller $c$ to a probabilistic automaton, we use a standard interval abstraction technique. The high-level idea is to divide the state space $S$ of \mcp into a finite set of equally sized hyperrectangles, denoted as $S'$. So every $s_1' \in S'$ has a corresponding region $S_1 \subset S$. \mba then has a transition from $s_1'$ to $s_2'$ (in \mba) if at least one state in $S_1$ has a transition to a state in $S_2$ (in \mcp) under some control command $u\in\mathcal{U}$. Note that state $s_1'$ can non-deterministically transition to multiple states in $S'$ because it covers an entire hyperrectangle of states in \mcp. This ensures that the interval abstraction is conservative, as it overapproximates the behaviors of \mcp.

\looseness=-1
Finally, the perception error model, controller, and interval abstraction are all parallel-composed into a single model as per \Cref{def:parcomp}.

\looseness=-1
\textit{Remark: } In describing the construction of the \mba, we have not mentioned anything about initial states: we do not keep track of a singular initial state for \mba. Instead, we will later run model checking for the full range of initial states of \mba to anticipate all runtime scenarios. For our purposes, the ``initial state-action space'' of \mba consists of every abstract state and control action. We include the control action in the initial state space because when using the model's safety probabilities online, we know what the next control action will be.


\subsubsection{Safety Property}
We need to transform the bounded time safety property on \mcp given in \Cref{eq:McpProperty} into an equivalent property on \mba. To do this, we compute the corresponding set of unsafe states on \mba, which is defined as $$ S'_{unsafe} = \{ s' ~|~ \exists s\in S_{unsafe}, s' \text{ corresponds to } s  \}$$

Letting $s'$ denote the state of \mba, the bounded time safety property for \mba is 
\begin{align}
    \psi_{\mba} \coloneqq \square^{\leq T} \left( s' \notin S'_{unsafe} \right) 
\end{align}


\subsubsection{Probabilistic Model Checking}



\looseness=-1
The final design-time step of our approach computes the safety probability of \mba for every state in the model. 
This step amounts to computing the below values using standard model checking tools:
$$Pr_{\mba(s',u)}^{min}(\psi_{\mba}), \;\; \forall s' \in S', \; \forall u \in \mathcal{U}$$

\looseness=-1
This requires running model checking on \mba for a range of initial states, which can be a time-consuming process. To mitigate this, we note that \mba is simpler to analyze than \mcp, since the size of the state space gets reduced during the interval abstraction process. Additionally, one can lower the time bound $T$ on the safety property to further speed up the model checking. 

\looseness=-1
The probabilities from the model checking are stored in a lookup table, which we denote as $G(s',u)$. It will be used at runtime to estimate the likelihood of the system being unsafe in the near future.

\looseness=-1
\emph{Remark:} This approach would work for any bounded time MTL properties, 
however more complex formulas may take longer to model check.

\subsection{Runtime}
\label{sec:approachRuntime}

At runtime, we observe the outputs of the state estimator and controller and run them through the lookup table to compute the probability of the system avoiding unsafe states for the next $T$ time steps. We propose two different ways of utilizing the state estimator. The first way is to simply use the point estimate from the state estimator. In cases of probabilistic estimators, this means taking the mean of the distribution. The second way uses the estimated state distribution from the state estimator. This requires an estimator with a probabilistic output, but most common state estimators, such as particle filters and Bayesian filters, keep track of the distribution of the state. The second way takes full advantage of the available state uncertainty to predict safety.

\subsubsection{Point Estimate}
At time $t$, the state estimator outputs state estimate $\bar{X}(t)$. The controller then outputs control command $U(t) = c(\bar{X}(t))$. Finally, we get a safety estimate $\hat{P}^{mon}_{point}(\bar{X}(t),U(t))$ by plugging $\bar{X}(t)$ and $U(t)$ into $G$:
\begin{align}\label{eq:stateMonitor}
    \hat{P}^{mon}_{point}(\bar{X}(t),U(t))) = G(\bar{X}(t),U(t))
\end{align}

\subsubsection{State Distribution}
Now assume that at time $t$ state estimator additionally outputs a state estimate $\bar{X}(t)$ and a finite, discrete distribution of the state, denoted as $P_{\bar{X}(t)}$. 
The controller still outputs control command $U(t)=c(\bar{X}(t))$. To estimate the safety of the system, we compute a weighted sum of the safety of each state in $P_{\bar{X}(t)}$ using $G$ and $U(t)$:
\begin{align}\label{eq:distMonitor}
    \hat{P}^{mon}_{dist}(P_{\bar{X}(t)},U(t)) = \sum_{s\in Supp\left(P_{\bar{X}(t)}\right)} P_{\bar{X}(t)}(s) \cdot G(s',U(t))
\end{align}
where $Supp\left(P_{\bar{X}(t)}\right)$ denotes the (finite) support of $P_{\bar{X}(t)}$, $P_{\bar{X}(t)}(s)$ denotes the estimated probability of \mcp being in state $s$ according to $P_{\bar{X}(t)}$, and $s' \in S'$ is the state in \mba that corresponds to state $s\in S$ in \mcp.

\section{Conservatism Guarantees}
\label{sec:guarantees}

This section proves that our state-distribution monitoring produces safety estimates that are \emph{conservative} and \emph{well-calibrated}; that is, we underestimate the probability of safety. We require two assumptions for that. The first assumption is the conservatism of abstract model \mba, by which we mean that its probability of being safe is always less than that of \mcp for the same initial condition. The second assumption is the calibration of the state estimator, which means that it produces state probabilities that align with the frequencies of these states. Below we formalize and discuss these assumptions before proceeding to our proof. 


\begin{definition}[Model Conservatism]\label{def:conservative}
    Abstraction \mba is \emph{conservative} with respect to system \mcp if
        \begin{align}
    P_{\mcp(s,u)}(\psi) \geq Pr_{\mba(s',u)}^{min}(\psi) \; \; \forall s \in S, u \in \mathcal{U}
\end{align}
where $s' \in S'$ is the state in \mba that corresponds to state $s$ in \mcp.

\end{definition}

In general, it is difficult to achieve provable conservatism of \mba by construction: the model parameters of complex components (e.g., vision-based perception) are estimated from data, and they may have complicated interactions with the safety chance. Instead, we explain why our approach is likely to be conservative in practice and validate this assumption in the next section. 

\looseness=-1
Consider \mcp and \mba as compositions of two sub-models: dynamics/control and perception/state estimation. We construct \mba such that its dynamics/control component always overapproximates the dynamics/control portion of \mcp. That means that any feasible sequence of states and control actions from \mcp is also feasible in \mba. This follows from the use of reachability analysis over the intervals of states to compute the transitions of \mba.


\looseness=-1
It is unclear how to formally compare the conservatism of perception/state estimation portions of \mba and \mcp when they are created from simulations of the perception/state estimation component of \mcp. First, these components are not modeled explicitly due to the high dimensionality of learning-based perception. Thus, when estimating probabilities from samples, we essentially approximate the average-case behavior of the component. Second, it is often unknown in which direction the probabilities need to be shifted to induce a conservative shift to the model. One opportunity is to use monotonic safety rules~\cite{cleaveland2022}; for now, this remains a promising and important future research direction.

\looseness=-1
To summarize, the dynamics/control portion of \mba overapproximates that of \mcp, while the perception/state estimation portion of \mba approximates the average-case behavior of \mcp. So one would expect, on average, \mba to be conservative with respect to \mcp, even though we cannot formally prove that.

\looseness=-1
Next, we define the calibration for the state estimator (adapted from Equation (1) of \cite{guo2017calibration}):

\begin{definition}[Calibration]\label{def:calibration}
    Given the dynamical system from \Cref{eq:system} and state estimator $h$ that outputs a discrete, finite distribution of the estimated state, denoted $P_{\bar{x}(t)}$, we say that $h$ is \emph{well-calibrated} if 
    \begin{align}\label{eq:estimatorCalibration}
        P(x(t)=s ~|~ P_{\bar{x}(t)}(s)=p)=p, \quad \forall p \in [0,1]
    \end{align}
\end{definition}

\looseness=-1
Intuitively, what this definition means is that if the state estimator says that there is probability $p$ that the system is in state $s$, then the system will be in state $s$ with probability $p$. Calibration is an increasingly common requirement for learning-based detectors~\cite{guo2017calibration,minderer_revisiting_2021,gong_confidence_2021,ruchkin_confidence_2022} and we validate it in our experiments.

\looseness=-1
Now we are ready for our main theoretical result: assuming that \mba is conservative with respect to \mcp and that the state estimator is well-calibrated, we show that the safety estimates of our monitoring are conservatively calibrated. 


\begin{theorem}
    Let the system \mcp in \Cref{eq:system} be given with a well-calibrated state estimator (\Cref{def:calibration}). Let \mba be a conservative model 
    of \mcp (\Cref{def:conservative}). Finally, assume that the safety of \mcp conditioned on the true state of the system is independent of the safety estimate from the monitor. Given state estimator distribution $P_{\bar{X}(t)}$ and control command $U \in \mathcal{U}$, the safety estimates from the state distribution monitor (\Cref{eq:distMonitor}) are conservative:
    \begin{align}\label{eq:conservativeCalibration}
        P( \psi_{\mcp}~|~\hat{P}^{mon}_{dist}(P_{\bar{X}(t)},U(t))=p) \geq p \quad \forall p\in[0,1]
    \end{align}
\end{theorem}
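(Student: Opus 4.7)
The plan is to expand $P(\psi_{\mcp}\mid \hat{P}^{mon}_{dist}=p)$ by conditioning on the true system state $X(t)$ via the law of total probability, and then apply the three hypotheses (calibration of the estimator, conservatism of \mba, and the stated independence) one at a time to each factor.

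First, I would fix a control command $U(t)=u$ and a realization of $P_{\bar{X}(t)}$, and write
\begin{align*}
P(\psi_{\mcp}\mid P_{\bar{X}(t)}, U(t)=u) = \sum_{s\in Supp(P_{\bar{X}(t)})} P(\psi_{\mcp}\mid X(t)=s,\, P_{\bar{X}(t)}, U(t)=u)\cdot P(X(t)=s\mid P_{\bar{X}(t)}, U(t)=u).
\end{align*}
By the independence assumption (safety conditioned on the true state is independent of the monitor output, and hence of the estimator distribution that produces it), the first factor equals $P_{\mcp(s,u)}(\psi_{\mcp})$. By the calibration assumption (\Cref{def:calibration}), the second factor equals $P_{\bar{X}(t)}(s)$. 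Substituting gives
\begin{align*}
P(\psi_{\mcp}\mid P_{\bar{X}(t)}, U(t)=u) = \sum_{s} P_{\bar{X}(t)}(s)\cdot P_{\mcp(s,u)}(\psi_{\mcp}).
\end{align*}

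Next, I would invoke model conservatism (\Cref{def:conservative}) term-by-term: $P_{\mcp(s,u)}(\psi_{\mcp})\geq Pr_{\mba(s',u)}^{min}(\psi_{\mba}) = G(s',u)$ for each $s$. Summing with the non-negative weights $P_{\bar{X}(t)}(s)$ preserves the inequality, yielding
\begin{align*}
P(\psi_{\mcp}\mid P_{\bar{X}(t)}, U(t)=u) \geq \sum_{s} P_{\bar{X}(t)}(s)\cdot G(s',u) = \hat{P}^{mon}_{dist}(P_{\bar{X}(t)}, u).
\end{align*}
Finally, I would condition on the event $\{\hat{P}^{mon}_{dist}(P_{\bar{X}(t)},U(t))=p\}$ and integrate over the (many) pairs $(P_{\bar{X}(t)},u)$ that realize this value; since the conditional expectation of the right-hand side equals $p$ and the inequality is preserved under averaging, we obtain $P(\psi_{\mcp}\mid \hat{P}^{mon}_{dist}=p)\geq p$, which is \Cref{eq:conservativeCalibration}.

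The main obstacle I anticipate is making the independence/calibration step rigorous. The calibration definition is stated per-state (conditioned on $P_{\bar{x}(t)}(s)=p$), but here I need the joint statement that, conditioned on the entire estimator output, the true state is distributed exactly as $P_{\bar{X}(t)}$; this requires either strengthening \Cref{def:calibration} to a joint version or arguing it state-by-state and combining. Similarly, the independence hypothesis must be interpreted as: once $X(t)$ is known, the future evolution of \mcp (and therefore $\psi_{\mcp}$) does not depend on what the estimator reported, which is natural because the monitor output is a function of the estimator and controller only. Handling both conditioning events (the estimator distribution and the monitor value $p$) cleanly without inflating notation is the part that will require the most care.
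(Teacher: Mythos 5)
Your proposal is correct and follows essentially the same route as the paper's proof: marginalize over the true state $X(t)$, use the independence assumption to reduce the first factor to $P_{\mcp(s,u)}(\psi)$, use estimator calibration to identify the second factor with $P_{\bar{X}(t)}(s)$, and then apply model conservatism term-by-term so the sum collapses to the monitor output $p$. Your extra step of first conditioning on the full realization of $(P_{\bar{X}(t)},U(t))$ and then averaging over the level set $\{\hat{P}^{mon}_{dist}=p\}$ --- together with your remark that the per-state calibration definition must be upgraded to a joint statement about the conditional law of $X(t)$ --- is a more careful rendering of what the paper does implicitly in one line, not a different argument.
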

\begin{proof}
    We start with conditioning the safety of the system on the state of the system and proceed with equivalent transformations:
    \begin{align*}
        P( \psi_{\mcp}~|~\hat{P}^{mon}_{dist}(P_{\bar{X}(t)},U(t))=p) &= \\
        \int_{s \in S} P\Big( \psi_{\mcp}~|~X(t)=s,\hat{P}^{mon}_{dist}\big(P_{\bar{X}(t)},U(t)\big)=p\Big)\cdot \quad & \\
        P\Big(X(t)=s~|~\hat{P}^{mon}_{dist}\big(P_{\bar{X}(t)},U(t)\big)=p\Big) ds &= \\
        \int_{s \in S} P( \psi_{\mcp}~|~X(t)=s) \cdot P_{\bar{X}(t)}(s) ds &= \\
        \sum_{s \in P_{\bar{X}(t)}} P( \psi_{\mcp}~|~X(t)=s) \cdot P_{\bar{X}(t)}(s) &= \\
        \sum_{s \in P_{\bar{X}(t)}} P_{\mcp(s,U(t))}(\psi)\cdot P_{\bar{X}(t)}(s) &\geq \\
        \sum_{s \in P_{\bar{X}(t)}} Pr_{\mba(s_{\downarrow},U(t))}^{min}(\psi)\cdot P_{\bar{X}(t)}(s) &= p
    \end{align*}
\end{proof}

\looseness=-1
The first step comes from marginalizing the state $X(t)$ into the left side of \Cref{eq:conservativeCalibration}. The second step comes from the assumption that the safety of the system given the state is independent of the monitor output and the assumed calibration of the monitor from \Cref{eq:estimatorCalibration}. The third step follows from the discrete, finite support of the state estimator output and the calibration. The fourth step comes from substituting and rearranging terms. The final step comes from the assumed conservatism of \mba in \Cref{def:conservative}.

\section{Case Study}
\label{sec:caseStudy}

Our experimental evaluation aims to demonstrate that the safety estimates from our monitoring approach are conservative and accurate. Additionally, we compare the effect of using the point-wise and distribution-wise state estimation. We perform the evaluation on a simulated water tank system and use the PRISM model checker~\cite{Kwiatkowska2011PRISM} to perform the probabilistic model checking. The code and models for the experiments can be found \href{https://github.com/earnedkibbles58/CPSMonitoringNFM2023/tree/master}{on Github}.

\subsection{Water Tanks}

Consider a system consisting of \tankcount water tanks, each of size \tanksize, draining over time, and a central controller that aims to maintain some water level in each tank. With $\wl_i[t]$ as the water level in the $i^{th}$ tank at time $t$, the discrete-time dynamics for the water level in the tank is given by:
\begin{equation}
    \wl_i[t+1] = \wl_i[t] - \outflow_i[t] +  \inflow_i[t],
\end{equation}
where $\inflow_i[t]$ and $\outflow_i[t]$ are the amounts of water entering (``inflow'') and leaving (``outflow'') respectively the $i^{th}$ tank at time $t$. The inflow is determined by the controller and the outflow is a constant determined by the environment.

Each tank is equipped with a noisy sensor to report its current perceived water level, \wlp, which is a noisy function of the true current water level, \wl. The noise on the sensor outputs is a Gaussian with zero mean and known variance. Additionally, with constant probability the perception outputs $\wlp=0$ or $\wlp=\tanksize$.

\looseness=-1
Each water tank uses a standard Bayesian filter as a state estimator. The filter maintains a discrete distribution over the system state. On each perception reading, the filter updates its state distribution using a standard application of the Bayes rule. The mean of the state distribution at this point is the estimator's point prediction, which is sent to the controller. Once the control action is computed, the filter updates its state distribution by applying the system dynamics.

The central controller has a single source of water to fill one tank at a time (or none at all) based on the estimated water levels. Then this tank receives a constant value $\inflow>0$ of water, whereas the other tanks receive 0 water. Each tank has a local controller that requests itself to be filled when its water level drops below the lower threshold \lct and stops requesting to be filled after its water level reaches the upper threshold \uct. If several tanks request to be filled, the controller fills the one with the lowest water level (or, if equal, it flips a coin).

At runtime, we are interested in the probability that a tank will neither be empty or overflowing, represented by the bounded-time safety property: $$ \ourwatertanksafeprop \coloneqq \square^{\leq 10} \vee_{i = 1..\tankcount} \left( wl_i > 0 \wedge  wl_i < \tanksize \right) $$

\looseness=-1
\subsubsection{Model Construction}
We construct the \mba model for $\tankcount=2$ water tanks, $\inflow=13.5$, $\outflow_i[t]=4.3$, $\tanksize=100$, $\lct=10$, $\uct=90$, and water level intervals of size 1 by following the description in \Cref{sec:approachDesignTime}. To model the combination of perception and state estimation, we estimated the state distributions with $100$ trials of $50$ time steps. \Cref{fig:waterTankPerceptionErrors} shows a histogram of the state estimation errors.


\begin{figure}
    \centering
    \includegraphics[width=0.6\linewidth]{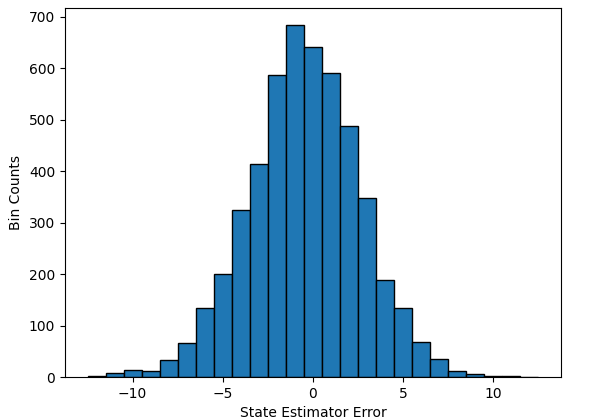}
    \caption{Histogram of state estimation errors for the water tanks.}
    \label{fig:waterTankPerceptionErrors}
    \vspace*{-3mm}
\end{figure}

\subsubsection{Model Checking}
The initial state of \mba comprises the water level of each tank, the low-level control command of each tank, and the filling command of the central controller. There are $101$ discrete water levels in each tank and $5$ possible configurations of the 3 control commands, for a total of $51005$ different initial states of \mba. We model-checked \ourwatertanksafeprop in these initial states on a server with 80 Intel(R) Xeon(R) Gold 6148 CPU @ 2.40GHz CPUs by running 50 parallel PRISM instances at a time. The full verification process took approximately 24 hours, which is acceptable for the design-time phase.

\subsection{Results}

To test our approach, we ran $500$ trials of the water tanks starting from water levels between $40$ and $60$. Each trial lasted for $50$ time steps (recall that the model checking checked $10$ time steps into the future) and $74$ trials resulted in a water tank either over- or underflowing. We evaluated three different monitors in our approach. One used the point estimates from the Bayesian filter (``point estimate monitor''), another used the estimated distribution from the state estimator (``state distribution monitor''), and the last used the true state of the system (``true state monitor'', for comparison only).


\subsubsection{Qualitative performance} \Cref{fig:waterTankTrials} shows the safety estimates of the monitors for one safe and one unsafe trial. The monitors keep high safety estimates for the entirety of the safe trial. In the unsafe trial, the failure occurred at time step 42 due to a tank overflowing. The safety estimates are high at first but then begin to drop around time step 30, predicting the failure with a 10-step time horizon.

\begin{figure}
\centering
\begin{subfigure}{0.45\textwidth}
  \centering
  \includegraphics[width=1\linewidth]{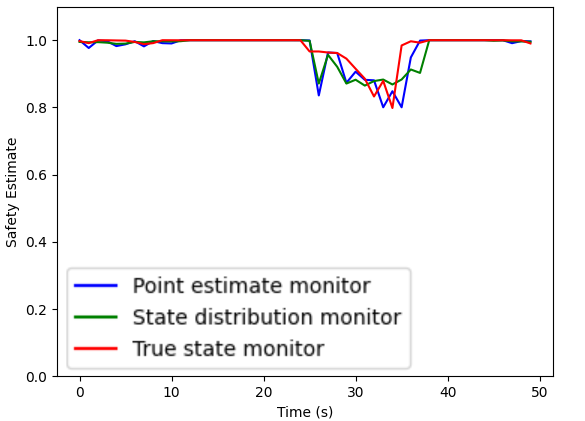}
  \label{fig:waterTankSafeTrial}
  \caption{\small Safe trial}
\end{subfigure}%
\begin{subfigure}{0.45\textwidth}
  \centering
  \includegraphics[width=1\linewidth]{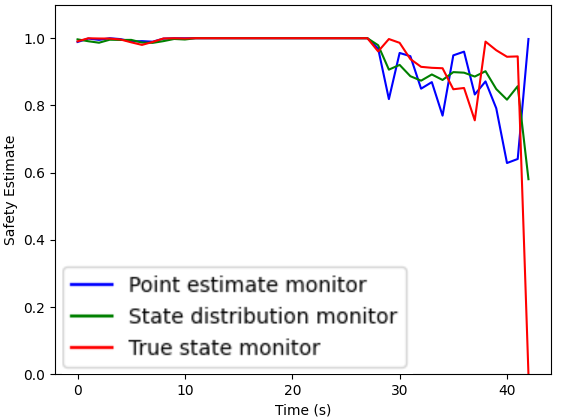}
  \label{fig:waterTankUnsafeTrial}
  \caption{Unsafe trial}
\end{subfigure}
\caption{\small Monitor safety estimates for two water tank trials.}
\label{fig:waterTankTrials}
\vspace*{-3mm}
\end{figure}

\subsubsection{Calibration} Next, to examine the overall calibration of our safety estimates, we bin the safety estimates into 10 bins of width 0.1 ($[0-0.1,0.1-0.2,\hdots,0.9-1]$) and compute the empirical safety chance within each bin. The results are shown in \Cref{fig:monitorCalibration}, with the caveat that we only plot bins with at least 50 samples to ensure statistical significance. The point estimate monitor and true state monitor are conservative for all of their bins. On the other hand, the state distribution monitor has the best overall calibration. We also computed the ECE and Brier scores for the monitors, which are shown in \Cref{tab:monitorMetrics}. To assess the conservatism of the monitors, we introduce a novel metric called \emph{expected conservative calibration error} (ECCE). It is similar to ECE, except that it only sums the bins where the average monitor confidence is greater than the empirical safety probability (i.e., the cases where the monitor is overconfident in safety). The ECCE values for the monitors are also shown in \Cref{tab:monitorMetrics}. Note that $ECE \geq ECCE$, because ECCE only aggregates a subset of the bins that ECE does. Our results show that the monitors are well-calibrated and conservative, and that the state distribution monitor manages to capture the uncertainty particularly well. 

\begin{figure}
    \centering
    \begin{subfigure}{0.32\textwidth}
        \includegraphics[width=1\linewidth]{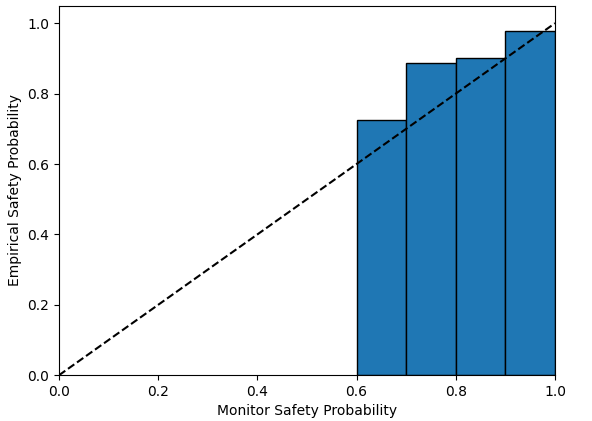}
        \label{fig:calibrationStateMonitor}
        \caption{Point estimate monitor \newline}
    \end{subfigure}
    \begin{subfigure}{0.32\textwidth}
        \includegraphics[width=1\linewidth]{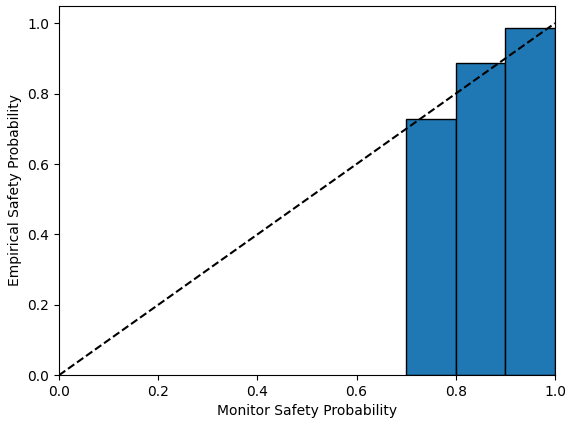}
        \label{fig:calibrationDistMonitor}
        \caption{State distribution monitor}
    \end{subfigure}
    \begin{subfigure}{0.32\textwidth}
        \includegraphics[width=1\linewidth]{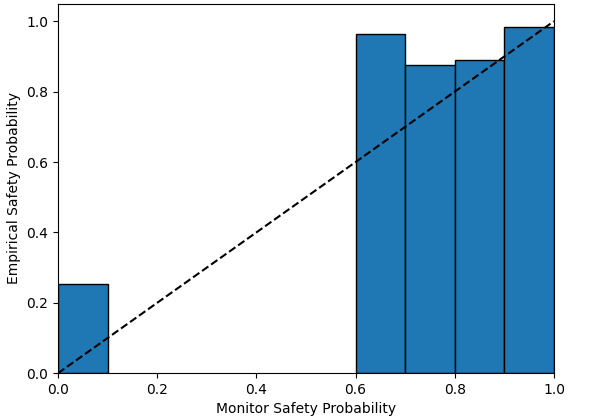}
        \label{fig:calibrationTrueStateMonitor}
        \caption{True state monitor \newline}
    \end{subfigure}
    \caption{Calibration plots for the three monitors. The x-axis shows the binned safety estimates reported by the monitor and the y-axis shows the empirical safety probability. The diagonal dashed line denotes perfect calibration. Bars higher than the dashed line represent under-confidence (i.e., conservatism) and bars lower than the dashed line represent over-confidence. }
    \label{fig:monitorCalibration}
    \vspace*{-3mm}
\end{figure}


\subsubsection{Accuracy} Finally, we are interested in the ability of the monitors to distinguish safe and unsafe scenarios. To do this, we computed a receiver operating characteristic (ROC) curve for the three monitors, shown in \Cref{fig:waterTankROC}, and areas under curve (AUC) in \Cref{tab:monitorMetrics}. As expected, the state distribution monitor and true state monitor outperform the point estimate monitor. One surprising aspect is that the state distribution monitor performs about as well as the true state monitor. We hypothesize that this is because the state distribution contains information about how well the state estimator will perform in the near future. Investigating this potential phenomenon is another area of future work.

\begin{figure}
    \centering
    \includegraphics[width=0.6\linewidth]{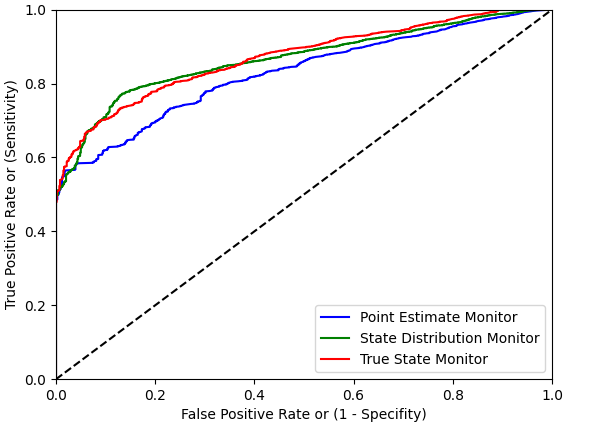}
    \caption{ ROC curves for the three monitors.}
    \label{fig:waterTankROC}
    \vspace*{-3mm}
\end{figure}

\begin{table}[]
    \centering
    \vspace*{-3mm}
    \caption{Calibration and classification metrics for the monitors. }
    \begin{tabular}{|c|c|c|c|c|}
        \hline Monitor Type & ECE & ECCE & Brier Score & AUC  \\
        \hline State estimate &  $0.0157$ & $0.00818$ & $0.0292$ & $0.828$ \\
        \hline State distribution &  $0.00252$ & $0.000899$ & $0.0275$ & $0.867$  \\
        \hline True state &  $0.0129$ & $0.00459$ & $0.0273$ & $0.870$ \\
        \hline
    \end{tabular}
    \label{tab:monitorMetrics}
    \vspace*{-3mm}
\end{table}

\subsubsection{Validation of assumptions}
First, we empirically validate whether \mba is conservative with respect to \mcp. Directly verifying this claim is infeasible, since it requires computing $P_{\mcp(s,u)}(\psi)$ for an infinite number of states $s \in S$. However, we can examine the performance of the true state monitor as a proxy for the conservatism of \mba: the true state monitor obtains the probabilities from \mba using the true state, avoiding any sensing and state estimation noise. The slightly underconfident true state monitor bins in \Cref{fig:monitorCalibration} and the very low ECCE in \Cref{tab:monitorMetrics} both provide strong evidence that \mba is indeed conservative.

\looseness=-1
Second, we examine the calibration assumption of the state estimator. We computed its ECE across all water levels, resulting in the negligible value of $0.00656$. We conclude that this state estimator gives calibrated results in practice.

\section{Conclusion}

\label{sec:conclusion}

This paper introduced a method for synthesizing conservative and well-calibrated predictive runtime monitors for stochastic dynamical systems. Our method abstracts the system as a PA and uses PMC to verify the safety of the states of the PA. At runtime, these safety values are used to estimate the true safety of the system. We proved that our safety estimates are conservative provided the PA abstraction is conservative and the system's state estimator is well-calibrated. We demonstrated our approach on a case study with water tanks. Future work includes applying our method to existing approaches that learn discrete abstractions directly from data, exploring how to construct conservative perception/state estimation abstractions, using our prior work in \cite{cleaveland2022} to reduce the number of model checking calls, and investigating the effects of the estimated state distribution's variance on the future system safety.


\section*{Acknowledgments}

This work was supported in part by DARPA/AFRL FA8750-18-C-0090 and by ARO W911NF-20-1-0080. Any opinions, findings and conclusions or recommendations expressed in this material are those of the authors and do not necessarily reflect the views of the Air Force Research Laboratory (AFRL), the Army Research Office (ARO), the Defense Advanced Research Projects Agency (DARPA), or the Department of Defense, or the United States Government.


\bibliographystyle{splncs04}
\bibliography{literature}

\end{document}